\documentclass[11pt,a4paper]{article}

\usepackage{fancybox}
\usepackage{fullpage}
\usepackage{color,soul}
\usepackage{hyperref}
\usepackage{amssymb, amsmath, amsthm, mathtools}
\usepackage{dutchcal}
\usepackage{todonotes}
\usepackage{pst-tree}

\usepackage{enumerate}
\usepackage{soul}


\theoremstyle{plain}
\newtheorem{theorem}{Theorem}
\newtheorem{lemma}[theorem]{Lemma}
\newtheorem{corollary}[theorem]{Corollary}

\theoremstyle{definition}
\newtheorem{definition}{Definition}

\allowdisplaybreaks


\title{\bf On approximate pure Nash equilibria in weighted congestion games with polynomial latencies}
\author{Ioannis Caragiannis\thanks{Department of Computer Science, Aarhus University, 
{\AA}bogade 34, 8200 Aarhus N, Denmark. Email: \texttt{iannis@cs.au.dk}.} \and Angelo Fanelli\thanks{CNRS, (UMR-6211), France. Email: \texttt{angelo.fanelli@unicaen.fr}.}}


\def\RP{\mathbb{R}^{\geq 0}} 
\def\RPP{\mathbb{R}^{> 0}} 
\def\RPPP{\mathbb{R}^{\geq 1}} 
\def\Z{\mathbb{Z}^{\geq 1}} 

\def\WCGD{{{WCG}$(\d)$ }} 
\def\WCG{\mathrm{WCG}} 

\def\N{{N}} 
\def\E{{E}} 
\def\w{{w}} 
\def\S{{S}} 
\def\a{{a}} 
\def\k{{k}} 
\def\l{{\ell}} 
\def\d{{d}} 
\def\x{{x}} 
\def\h{{h}} 
\def\up{\texttt{up}} 
\def\down{\texttt{down}} 

\def\ss{\mathbf{s}} 
\def\ee{\mathbf{e}} 
\def\oo{\mathbf{o}} 
\def\SoS{ {S}} 
\def\s{s} 

\def\P{{P}} 
\def\L{{L}} 

\def\c{{c}} 
\def\C{{C}} 

\def\POT{{\Phi}} 
\def\POTONE{{\Phi}_{\mathbf{1}}} 
\def\Pot{{\Psi}} 

\def\OPT{{\textrm{OPT}}} 


\newcommand{\game}[1]{\langle{#1}\rangle} 

\newcommand{\PoS}[1]{\mathrm{PoS}_{#1}}

\newcommand{\EQ}[1]{\mathcal{E}_{#1}}

\newcommand{\coeff}[1]{\gamma_{#1}}

\newcommand{\add}[1]{\textcolor{black}{#1}}


\sloppy

\date{}

\begin{document}

\maketitle

\begin{abstract}
We consider weighted congestion games with polynomial latency functions of maximum degree $d\geq 1$. For these games, we investigate the existence and efficiency of approximate pure Nash equilibria  which are obtained through sequences of unilateral improvement moves by the players.

By exploiting a simple technique, we firstly show that these games admit an infinite set of $\d$-approximate potential functions. This implies  that there always exists a $\d$-approximate pure Nash equilibrium which can be reached through any sequence of $\d$-approximate improvement moves by the players. As a corollary, we also obtain that, under mild assumptions on the structure of the players' strategies, these games also admit a constant approximate potential function. Secondly,  using a simple potential function argument, we are able to show that a $(\d+\delta)$-approximate pure Nash equilibrium of cost at most $(\d+1)/(\d+\delta)$ times the cost of an optimal state always exists, for every $\delta\in [0,1]$. 
\end{abstract}


\section{Introduction}\label{sec:intro}

Among other solution concepts, the notion of the pure Nash equilibrium plays a central role in Game Theory. 
Pure Nash equilibria in a game characterize situations in which no player has an incentive to unilaterally deviate from the current situation in order to achieve a higher payoff. 
Unfortunately, it is well known that there are games that do not have any pure Nash equilibrium. 
Furthermore, even in games where the existence of pure Nash equilibria is guaranteed, these equilibria could be very inefficient compared to 
solutions dictated by a central authority. 
Such negative results 
question the importance of pure Nash equilibria as solution concepts that characterize the behavior of rational players.

One way to overcome the limitations of the non-existence and inefficiency of pure Nash equilibria is to consider a relaxation of the 
equilibrium condition. 
This relaxation leads to the concept of \emph{approximate} pure Nash equilibrium;
it characterizes situations where no player can {\em significantly improve} her payoff by unilaterally deviating from her current strategy.
Approximate pure Nash equilibria can accommodate small modeling inaccuracies (e.g., see the 
discussion
in \cite{CGC04}), therefore they may be more desirable as solution concepts in practical 
settings.
Besides 
existence and efficiency, 
approximate pure Nash equilibria are an appealing alternative 
also
from a computational point of view (e.g., \cite{BiloFMM15, CaragiannisFGS11, CFGS15, ChienS11, FeldottoGKS17}).

In this work, we investigate the existence and efficiency of approximate pure Nash equilibria in  weighted congestion games, \add{with the additional requirement that such equilibria are obtained through  sequences of unilateral improvements by the players}. 
This class of games forms a general framework which models situations where a group of agents compete for the use of a set of shared resources.
In the following, we define weighted congestion games and give a formal statement of the two problems we address. 
We continue this section with a discussion of the related literature and a detailed presentation of our contribution. This discussion includes formal definitions that are necessary in the presentation of our problem statement and the comparison of our results to related work.
\bigskip

\noindent{\bf Weighted congestion games.}
A typical example of a subclass of games in this family is that of weighted congestion games in networks, where each player has alternative paths that connect two nodes of the network as strategies to select. 
Each player selfishly chooses a path trying to optimize a local objective. 
Thus, each network link corresponds to a resource that can be shared among the players in the network.
Each link incurs a latency to all players who use it; this latency depends on the total weight (congestion) of the players that use the link according to a resource-specific, non-negative, and non-decreasing latency function. 
Therefore, among the given set of paths, 
each player aims to select one that 
minimizes her individual total cost, i.e., the sum of the latencies on the links in her path. 
Departing from games in networks, 
we can generalize the set of strategies of a player to any combinatorial structure;
in weighted congestion games, we generally assume that the set of strategies of a player is a subset of the power set of a ground set of resources.  
A particular setting is when latencies are polynomial functions of the total weights of the users; we refer to this setting as weighted congestion games with polynomial latencies. 
In the following we give a formal description of the games in this class. 

\vspace{.3cm}
\fbox{
\parbox{0.9\textwidth}{
A {\sc weighted congestion game} with polynomial latencies of maximum degree $d \geq 1$ 
(we use \WCGD to denote the class of these games) 
consists of a set of \emph{players} $N=\{1,2,\ldots, |\N|\}$ and a set of \emph{resources} $\E=\{1,2,\ldots, |\E|\}$. Each player $i$ is associated with a \emph{weight} $\w_i \in \RPP$ and a \add{non-empty set of} \emph{strategies}  $\S_i \subseteq 2^{\E}$. Every resource $e$ is described by a pair $(\a_e, \k_e) \in \RPP \times \{1,2,\ldots,\d\}$, which encodes the \emph{latency function} $\l_e: 2^\N \mapsto \RP$ associated with $e$,  mapping every subset of players $\P \subseteq \N$ to the non-negative real; specifically this value is given by the product of $\a_e$ with the $\k_e$-th power of the total weights of the players in $\P$, i.e.,  
\[
\l_e(\P) = \a_e\left(\sum_{j\in \P}\w_j\right)^{\k_e}\!\!\!\!\!.
\] 
We refer to $\a_e$ and $\k_e$ as the \emph{coefficient} and the \emph{degree} of resource $e$, respectively, \add{and assume that $\d = \max_{e\in\E}\k_e$}. 
The set of \emph{states} of the game is denoted by $\SoS=S_1 \times S_2 \times \ldots \times S_{|N|}$. 
For every state $\ss\in \SoS$, we refer to its $i$-th component, that is the strategy played  by player $i$ in $\ss$, by $\ss(i)$.
For every state $\ss$ and resource $e$, we denote by $\L_e(\ss)$ the set of players using resource $e$ in $\ss$, i.e., $\L_e(\ss) = \{j\in N : e\in \ss(j)\}$. 
We refer to the sum of the weights of all players in $\L_e(\ss)$ as the \emph{congestion} of $e$ in $\ss$.
For every state $\ss$, the \emph{cost} incurred by player $i$ in $\ss$ is  
\[
\c_i(\ss) = \sum_{e\in \ss(i)} \l_e(\L_e(\ss)).
\]
\add{Notice that, by definition, $\c_i(\ss) > 0$ for every player $i$ and state $\ss$.}
For $\tau>0$, we say that the game is $\tau$-{\em congested} if for every resource $e\in E$, every state $\ss\in \SoS$, and every player $i\in \L_e(\ss)$, it holds that $\sum_{j\in \L_e(\ss)\setminus \{i\}}\w_j \geq \tau \k_e \w_i$.
}
}
\vspace{0.6cm}

\noindent{\bf Equilibria, potential functions, and the price of stability.}
We now introduce concepts that are necessary to formally state our problems and present our results.
For every state $\ss \in \SoS$, every player $i\in\N$ and every $s\in \S_i$, we denote by $[\ss_{-i}, s]$ the new state obtained from $\ss$ by setting the $i$-th component, that is the strategy of $i$, to $\s$ and keeping all the remaining components unchanged, i.e., $[\ss_{-i}, s](i) = \s$ and $[\ss_{-i}, s](j) = \ss(j)$ for every player $j\neq i$.

The transition from $\ss$ to $[\ss_{-i}, s]$ is called a \emph{move} of player $i$ from state $\ss$. 
For $\alpha \geq 1$, we say that a transition from $\ss$ to $[\ss_{-i}, s]$ is an \emph{$\alpha$-improvement move} for $i$ if $\alpha\c_i([\ss_{-i}, s]) < \c_i(\ss)$. 
For $\alpha \geq 1$, we say that a state-valued function $\Gamma : \SoS \mapsto \RP$ is an \emph{$\alpha$-approximate potential function} for the game if it strictly decreases at every  $\alpha$-improvement move, i.e., $\Gamma([\ss_{-i}, s]) < \Gamma(\ss)$ whenever $\alpha\c_i([\ss_{-i}, s]) < \c_i(\ss)$. 
If the game admits an $\alpha$-approximate potential function $\Gamma$, then every sequence of $\alpha$-improvement moves leads to a \emph{local optimum} of $\Gamma$, i.e., to a state in which no further $\alpha$-improvement move can be performed. Such a state is called $\alpha$-approximate pure Nash equilibrium. 
Formally, for $\alpha \geq 1$, we say that a state $\ss\in \SoS$ is an \emph{$\alpha$-approximate pure Nash equilibrium} if, for every player $i\in \N$ and every strategy $\s\in\S_i$, we have $\c_i(\ss) \leq \alpha\c_i([\ss_{-i}, s])$; 
if $\alpha = 1$, we simply refer to $\ss$ as an \emph{exact pure Nash equilibrium}, or simply \emph{pure Nash equilibrium}, rather than a $1$-approximate pure Nash equilibrium. 
\add{For any $\alpha \geq 1$ such that the game admits an $\alpha$-approximate potential function}, we denote by $\EQ{\alpha} \subseteq \SoS$ the set of all $\alpha$-approximate pure Nash equilibria of the game.

The \emph{social cost} of state $\ss\in \SoS$ is the weighted sum of the players' costs, i.e., $\C(\ss) = \sum_{i\in \N} \w_i\c_i(\ss)$. Notice that, by summing over the resources instead of the players, $\C(\ss)$ can be rewritten as $\C(\ss) = \sum_{e\in \E} \a_e\left(\sum_{j\in \L_e(\ss)}\w_j\right)^{\k_e + 1}$. 
Every state $\ss\in \SoS$ that minimizes the social cost is called a \emph{social optimum}; we denote by $\OPT$ the set of social optima of the game, i.e.,  $\OPT = \arg\min_{\ss \in \SoS}\C(\ss)$.
For any $\oo \in \OPT$ 
\add{ and any $\alpha\geq 1$ such that $\EQ{\alpha}\neq \emptyset$}, we define the \emph{$\alpha$-approximate price of stability} of the game as $\PoS{\alpha} = \min_{\ee \in \EQ{\alpha}}\frac{\C(\ee)}{\C(\oo)}$.  
\bigskip


\noindent{\bf Problem statements.}
We consider the following two problems for games in \WCGD. 

\vspace{.3cm}
\fbox{
\parbox{0.9\textwidth}{
\begin{enumerate}[(I)]
	\item {\sc Existence of convergent sequences of $\alpha$-improvement moves.}	In this problem, 
	we seek for a reasonably small $\alpha\geq 1$ for which any sequence of $\alpha$-improvement moves converges to an $\alpha$-approximate pure Nash equilibrium. This would be equivalent to saying that the game admits an $\alpha$-approximate potential function, whose value decreases at every $\alpha$-improvement move and whose local optima coincide with  $\alpha$-approximate pure Nash equilibria.\label{prob:1}  
	\item {\sc Bounding the approximate price of stability.} In this problem, for any value of $\alpha\geq 1$ for which the game admits an $\alpha$-approximate pure Nash equilibrium, 
	we aim at bounding its $\alpha$-approximate price of stability.\label{prob:2}
\end{enumerate}
}
}
\vspace{.3cm}

\add{Notice that problem (\ref{prob:1}) is more stringent than the problem of establishing whether the game admits  an approximate pure Nash equilibrium. Here, we also require that such an equilibrium can be achieved through a sequence of $\alpha$-improvement moves.}
\bigskip

\noindent{\bf Related work.}
The unweighted setting (i.e., when all players have unit weights) with general latencies has been widely studied in the literature. 
For this case, Rosenthal \cite{R73} proved that there exists a $1$-approximate potential function;
 this immediately implies that every sequence of $1$-improvement moves by the players leads to a pure Nash equilibrium. 
 Unfortunately, this nice property does not carry over when players have weights.
In fact, \add{if there are at least three players and we restrict to the set of twice continuously differentiable latency functions}, a $1$-approximate potential function exists only when the latencies are linear or exponential \cite{FKS05,HK10,PS06}. For polynomial latencies of constant maximum degree strictly higher than $1$, pure Nash equilibria may not exist \cite{FKS05,Goemans05,Lavy01}.
More generally, for arbitrary \add{non-decreasing} latencies, the problem of deciding whether a given instance  has a pure Nash equilibrium is {NP}-hard \cite{DS08}.
 
 Caragiannis et al. \cite{CFGS15} proved that 
 any game in \WCGD  
 admits a $\d!$-approximate potential function.
 This result has been subsequently improved considerably by Hansknecht et al. \cite{Hansknecht14}, who showed that 
 any game in \WCGD  
 admits a $(\d+1)$-approximate potential function.
Their potential function is defined in a parameterized way, using an ordering of the players as parameter. The particular $(d+1)$-approximate potential function is obtained by ordering the players in terms of their weights.

The $1$-approximate price of stability for 
games in \WCGD 
has been recently investigated by Christodoulou et al.~in \cite{Christodoulou19}; they provided a lower bound of $\Omega\left((\d/\log d)^{\d+1}\right)$, matching the upper bound of Aland et al.~\cite{Aland11}.
The authors of \cite{Christodoulou19} also showed bounds on the $\alpha$-approximate price of stability. 
Specifically, they proved that 
any game in \WCGD 
with weights ranging in $[1,\w_{\max}]$ has an $\alpha$-approximate pure Nash equilibrium, for any $\alpha$ in the range $\left[\frac{2(\d+1)\w_{\max}}{2\w_{\max} + \d + 1}, \d+1\right]$, whose cost is at most $1 + (\frac{\d+1}{\alpha} - 1)\w_{\max}$ times the cost of any optimal state.  
Their proof exploits a potential function called Faulhaber's potential.
For the unweighted setting, tight bounds of $1.577$ for linear latencies \cite{Caragiannis10, Christodoulou05} and of $\Theta(\d)$ for polynomial latencies of degree $d\geq 1$ \cite{Christodoulou15} are known.
\bigskip

\noindent{\bf Our contribution.}
Concerning problem (\ref{prob:1}), we show (in Theorem \ref{thm:approx}) that 
games in \WCGD admit \add{an infinite set of}
$\d$-approximate potential functions.
This implies that every sequence of $\d$-improvement moves by the players always leads 
to a $\d$-approximate pure Nash equilibrium. 
This result is achieved using the technique that is formalized in Lemma~\ref{thm:fundamental} and the class of state-valued functions $\POT_\gamma$, \add{parametrized by $\gamma$}, defined in Definition \ref{def:POT}.
Essentially, while Definition \ref{def:POT} provides a simple interesting class  of candidate potential functions, Lemma~\ref{thm:fundamental} gives a local condition for each resource to determine the approximation guarantee achieved by a given state-valued function. 
So, by exploiting Lemma~\ref{thm:fundamental}, in Theorem \ref{thm:approx} we are able to show that the class $\POT_\gamma$ contains \add{a large subclass of} 
$(\d + \delta)$-approximate potential functions, for every $\delta \in [0,1]$.

We remark that, our potential functions are substantially different from the potential function proposed in \cite{Hansknecht14}.
In particular, the potential in \cite{Hansknecht14} is obtained in a Rosenthal-like fashion, by ordering the players and summing their costs assuming that each player is affected only by the congestion caused by preceding players in the ordering. In contrast, our potential is much simpler and is obtained by a suitable scaling of the coefficients of the polynomials in the definition of the latency functions.
As a matter of fact, we define potentials which, despite their simplicity, provide an approximation factor that approaches $\d$ (instead of $\d+1$) from below, although it is worth noticing that, for small values of $\d$ (e.g., $d\in \{2,3,4\}$), the approximation shown in \cite{Hansknecht14} coincides with the one guaranteed by $\POTONE$ (see our discussion in Section~\ref{sec:approx} and Table 1 in \cite{Hansknecht14}). 
As a corollary of part (b) of Theorem \ref{thm:approx}, we show (in Corollary \ref{cor:opt}) that the social optimum of 
the game 
is a $(\d+1)$-approximate pure Nash equilibrium.
\add{This result has been shown before in \cite{Christodoulou19}; our aim with restating it here is to  highlight the merit of the class of state-valued functions $\POT_\gamma$ as an effective tool to effortlessly prove an important property of congestion games}.
More importantly, \add{as an exclusive property of our class of potential functions}, the proof of Theorem \ref{thm:approx} implies, as stated by Corollary \ref{cor:congested}, 
that 
$\tau$-congested games admit
approximate potential functions with considerably better approximation guarantees (approaching $1$ for very high congestion). 

The class of functions $\POT_\gamma$ serves also as an essential tool to give an answer to problem (\ref{prob:2}). 
More specifically, by exploiting \add{ the collection of $(\d + \delta)$-approximate potentials in the class of functions} $\POT_\gamma$, we are able to show (Theorem \ref{thm:pos}) an upper bound of $(\d+1)/(\d+\delta)$ for the $(\d + \delta)$-approximate price of stability, for every $\delta\in [0,1]$. 
To prove this bound, we use the standard potential function argument.
Specifically, we first bound (Lemma \ref{lem:pos}) the value of any $(\d + \delta)$-approximate potential function for a given state in terms of the social cost of that state;
if we then perform a sequence of $(\d + \delta)$-improvement moves starting from an optimal state, the potential does not increase, and hence we can bound the cost of any $(\d + \delta)$-approximate pure Nash equilibrium that we reach. 
Notice that our bound does not depend on the range of the players' weights and significantly improves the bound provided in \cite{Christodoulou19}, by making use of a different and much simpler potential function.
\bigskip

\noindent{\bf Roadmap.} The rest of the paper is structured as follows.
In Section \ref{sec:approx} we first present a simple technique to bound the approximation guarantee of a given state-valued function. Subsequently, we show that a class of state-valued functions provide a good approximation. Then, the bound on the approximate price of stability is presented in Section \ref{sec:pos}. 
We conclude with a discussion on open problems and possible extensions of our results in Section~\ref{sec:discussion}.





\section{Approximate potential functions}\label{sec:approx}
The main result of this section is given by Theorem \ref{thm:approx}, 
\add{which identifies a class of $\d$-approximate potential functions}.
Before presenting this result, in Lemma~\ref{thm:fundamental} we illustrate the tool we use to design an approximate potential function; 
this tool gives a local condition to each resource to determine the approximation guarantee of a given state-valued function. 
We conclude the section with two corollaries. The first (Corollary \ref{cor:opt}) states that the social optimum 
of the game
is always a $(\d+1)$-approximate pure Nash equilibrium. 
The second (Corollary \ref{cor:congested}) indicates that, under mild conditions, 
the game
always admits a constant approximate potential function.

\begin{lemma}\label{thm:fundamental}
Let  $\Gamma: \SoS\mapsto \RPP$ be a state-valued function such that $\Gamma(\ss) = \sum_{e\in \E}\a_e\Gamma_e\big(\L_e(\ss)\big)$, where $\Gamma_e: 2^\N \mapsto \RPP$. If, for every resource $e\in \E$, every non-empty subset of players $\P\subseteq \N$ and every player $i \in \P$, there exist $\lambda, \upsilon\in \RPP$, with $\lambda \leq \upsilon$, such that 
\begin{equation}\label{thm:fundamental:EQ}
\frac{\w_i\l_e(\P)}{\a_e\Big(\Gamma_e(\P) - \Gamma_e(\P\setminus \{i\})\Big)} 
\in [\lambda, \upsilon]
\end{equation}
then $\Gamma$ is a $\left(\frac{\upsilon}{\lambda}\right)$-approximate potential function.
\end{lemma}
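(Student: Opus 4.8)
The plan is to verify directly that $\Gamma$ strictly decreases along every $(\upsilon/\lambda)$-improvement move, which is exactly the definition of a $(\upsilon/\lambda)$-approximate potential function. So fix a state $\ss\in\SoS$, a player $i\in\N$ and a strategy $\s\in\S_i$ with $\alpha\,\c_i([\ss_{-i},\s]) < \c_i(\ss)$, where $\alpha := \upsilon/\lambda$, and set $\ss' := [\ss_{-i},\s]$. First I would record how the loads change: since only $i$ deviates, $\L_e$ is affected only on the symmetric difference $\ss(i)\triangle\s$; on $e\in\ss(i)\setminus\s$ player $i$ leaves, so $\L_e(\ss') = \L_e(\ss)\setminus\{i\}$; on $e\in\s\setminus\ss(i)$ player $i$ joins, so $\L_e(\ss) = \L_e(\ss')\setminus\{i\}$; and on $e\in\ss(i)\cap\s$, as well as on every resource outside $\ss(i)\cup\s$, the load is unchanged. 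Summing $\Gamma$ resource-by-resource, the unchanged terms cancel and $\Gamma(\ss)-\Gamma(\ss')$ equals $\sum_{e\in\ss(i)\setminus\s}\a_e\big(\Gamma_e(\L_e(\ss)) - \Gamma_e(\L_e(\ss)\setminus\{i\})\big)$ minus $\sum_{e\in\s\setminus\ss(i)}\a_e\big(\Gamma_e(\L_e(\ss')) - \Gamma_e(\L_e(\ss')\setminus\{i\})\big)$.

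The second step is to apply the hypothesis~\eqref{thm:fundamental:EQ} termwise. Note first that~\eqref{thm:fundamental:EQ} forces each marginal difference $\a_e(\Gamma_e(\P)-\Gamma_e(\P\setminus\{i\}))$ to be strictly positive (it is a positive quantity divided by something lying in $[\lambda,\upsilon]\subseteq\RPP$), so the inequalities below point the right way. For $e\in\ss(i)\setminus\s$ I take $\P=\L_e(\ss)$ and use the lower endpoint: $\a_e(\Gamma_e(\L_e(\ss))-\Gamma_e(\L_e(\ss)\setminus\{i\})) \geq \w_i\,\l_e(\L_e(\ss))/\upsilon$. For $e\in\s\setminus\ss(i)$ I take $\P=\L_e(\ss')$ and use the upper endpoint: $\a_e(\Gamma_e(\L_e(\ss'))-\Gamma_e(\L_e(\ss')\setminus\{i\})) \leq \w_i\,\l_e(\L_e(\ss'))/\lambda$. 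Plugging these into the expression from the previous step gives $\Gamma(\ss)-\Gamma(\ss') \geq \frac{\w_i}{\upsilon}\sum_{e\in\ss(i)\setminus\s}\l_e(\L_e(\ss)) - \frac{\w_i}{\lambda}\sum_{e\in\s\setminus\ss(i)}\l_e(\L_e(\ss'))$.

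The final step is to recognize the two partial sums in terms of the players' costs. Writing $Q := \sum_{e\in\ss(i)\cap\s}\l_e(\L_e(\ss))$ and using that the load is unchanged on $\ss(i)\cap\s$, one has $\sum_{e\in\ss(i)\setminus\s}\l_e(\L_e(\ss)) = \c_i(\ss)-Q$ and $\sum_{e\in\s\setminus\ss(i)}\l_e(\L_e(\ss')) = \c_i(\ss')-Q$. Substituting and regrouping the $Q$ terms yields $\Gamma(\ss)-\Gamma(\ss') \geq \frac{\w_i}{\upsilon}\c_i(\ss) - \frac{\w_i}{\lambda}\c_i(\ss') + \w_i Q\big(\tfrac1\lambda-\tfrac1\upsilon\big)$; since $\lambda\leq\upsilon$ and $Q\geq 0$ the last term is nonnegative, so $\Gamma(\ss)-\Gamma(\ss') \geq \frac{\w_i}{\lambda}\big(\tfrac{\c_i(\ss)}{\alpha} - \c_i(\ss')\big) > 0$, the strict inequality being precisely the $\alpha$-improvement assumption. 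This completes the argument. There is no deep obstacle here; the only thing to be careful about is the bookkeeping — getting the telescoping to isolate exactly $\ss(i)\triangle\s$, keeping the common contribution $Q$ cancelling identically (which needs the loads on $\ss(i)\cap\s$ to be unchanged), and orienting each use of~\eqref{thm:fundamental:EQ} correctly, with $\upsilon$ governing the resources $i$ abandons and $\lambda$ the resources $i$ adopts. The hypothesis $\lambda\leq\upsilon$ enters exactly once, as the sign of the coefficient of $Q$.
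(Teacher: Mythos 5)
Your proof is correct and follows essentially the same route as the paper's: decompose $\Gamma(\ss)-\Gamma([\ss_{-i},\s])$ over the resources player $i$ leaves and joins, bound each marginal difference via the hypothesis (with $\upsilon$ for abandoned resources and $\lambda$ for adopted ones), and absorb the common resources $\ss(i)\cap\s$ through a nonnegative term using $\lambda\leq\upsilon$ before invoking the improvement condition. The only cosmetic slip is calling the bound $\a_e(\Gamma_e(\P)-\Gamma_e(\P\setminus\{i\}))\geq \w_i\l_e(\P)/\upsilon$ the ``lower endpoint'' of \eqref{thm:fundamental:EQ} (it comes from the upper endpoint $\upsilon$), but the inequalities themselves are oriented correctly.
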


\begin{proof}
	Let us consider a state $\ss\in \SoS$ and a player $i$.
Let us assume that $i$ can perform an $\frac{\upsilon}{\lambda}$-improvement move by replacing strategy $\ss(i)$ with $\s\neq \ss(i)$, i.e., $\frac{\upsilon}{\lambda}\c_i([\ss_{-i}, s]) < \c_i(\ss)$.
In order to prove the claim we need to show that $\Gamma([\ss_{-i}, s]) < \Gamma(\ss)$.
To this aim, let us bound the expression $\Gamma([\ss_{-i}, s]) - \Gamma(\ss)$. 
By the definition of the state-valued function $\Gamma$, we have
\begin{align}\nonumber
	\Gamma([\ss_{-i}, s]) - \Gamma(\ss)
	&=
	\sum_{e\in \E}\a_e\Gamma_e\big(\L_e([\ss_{-i}, s])\big) - \sum_{e\in \E}\a_e\Gamma_e\big(\L_e(\ss)\big)\\\nonumber
	&=
	\sum_{e\in \E}\a_e\Big(\Gamma_e\big(\L_e([\ss_{-i}, s])\big) - \Gamma_e\big(\L_e(\ss)\big)\Big)\\\nonumber
	&=
	\sum_{e\in \s\setminus \ss(i)}\a_e\Big(\Gamma_e\big(\L_e([\ss_{-i}, s])\big) - \Gamma_e\big(\L_e(\ss)\big)\Big)
	\\ &\hspace{2cm} 
	- \sum_{e\in \ss(i)\setminus \s}\a_e\Big(\Gamma_e\big(\L_e(\ss)\big) -\Gamma_e\big(\L_e([\ss_{-i}, s])\big)\Big).\label{eq:gamma}
\end{align}
Now, observe that player $i$ belongs to $\L_e([\ss_{-i}, s])$ but not to $\L_e(\ss)$ when $e\in s\setminus \ss(i)$ (thus, $\L_e(\ss)=\L_e([\ss_{-i},s])\setminus \{i\}$), while she belongs to $\L_e(\ss)$ but not to $\L_e([\ss_{-i}, s])$ when $e\in \ss(i) \setminus s$ (thus, $\L_e([\ss_{-i},s])=\L_e(\ss)\setminus \{i\}$ in this case). Hence, by applying \eqref{thm:fundamental:EQ} to the right-hand side of \eqref{eq:gamma}, we obtain
\begin{align}	\nonumber
	\Gamma([\ss_{-i}, s]) - \Gamma(\ss)	&\leq
\frac{w_i}{\lambda}\sum_{e\in s\setminus \ss(i)}\l_e\big(\L_e([\ss_{-i}, s])\big) - \frac{w_i}{\upsilon}\sum_{e\in \ss(i)\setminus s}\l_e\big(\L_e(\ss)\big)\\\nonumber
&\leq
\frac{w_i}{\lambda}\sum_{e\in s\setminus \ss(i)}\l_e\big(\L_e([\ss_{-i}, s])\big) - \frac{w_i}{\upsilon}\sum_{e\in \ss(i)\setminus s}\l_e\big(\L_e(\ss)\big)
+\left(\frac{w_i}{\lambda}-\frac{w_i}{\upsilon}\right) \sum_{e\in \ss(i)\cap  s}\l_e\big(\L_e(\ss)\big)\\\nonumber
&= \frac{w_i}{\lambda}\sum_{e\in s}\l_e\big(\L_e([\ss_{-i}, s])\big) - \frac{w_i}{\upsilon}\sum_{e\in \ss(i)}\l_e\big(\L_e(\ss)\big)\\\label{eq:fundamental-last}
	&=\frac{\w_i}{\upsilon}\Big(\frac{\upsilon}{\lambda}\c_i([\ss_{-i}, s]) - \c_i(\ss)\Big).
\end{align}
The second inequality is due to the fact $\upsilon\geq \lambda$. The first equality follows since $\L_e(\ss)=\L_e([\ss_{-i}, s])$ for $e\in \ss(i)\cap  s$ and the last equality follows by the definition of the players' cost. The lemma follows since \eqref{eq:fundamental-last} implies that $\Gamma([\ss_{-i}, s]) < \Gamma(\ss)$ whenever $\frac{\upsilon}{\lambda}\c_i([\ss_{-i}, s]) < \c_i(\ss)$.
\end{proof}

In order to state the main result of this section (Theorem~\ref{thm:approx}), we define a class of state-valued functions mapping every state of the game to a non-negative real number. This class of functions will be exploited in subsequent results as well.

\vspace{.3cm}
\fbox{
	\parbox{0.9\textwidth}{
\begin{definition}\label{def:POT}
	For every $\gamma = (\coeff{e})_{e\in\E}$, we define 
	\begin{equation*}
	\POT_\gamma(\ss) = \sum_{e\in \E}\a_e \Pot_e^{\coeff{e}}\big(\L_e(\ss)\big),
	\end{equation*}
	where, for every resource $e\in \E$, it is $\Pot_e^{\coeff{e}}\big(\emptyset\big)=0$ and 
	\begin{equation*}
	\Pot_e^{\coeff{e}}(\P) = \frac{\coeff{e}}{\k_e + 1}\left(\sum_{j\in \P}\w_j\right)^{k_e + 1} + \left(1 - \frac{\coeff{e}}{\k_e + 1}\right)\sum_{j\in \P}\w_j^{\k_e + 1},
	\end{equation*}
	for every nonempty subset of players $\P \subseteq \N$.
\end{definition} 
}}
\vspace{0.3cm}


\begin{theorem}\label{thm:approx}
Let $\gamma=(\gamma_e)_{e\in E}$ with $1\leq \coeff{e}\leq k_e+1$ for $e\in E$ and $\gamma^*=\max_{e\in E}{\gamma_e}$. We have 
\begin{itemize}
	\item[(a)] If $\gamma^*=1$ then 
	$\POT_{\gamma}$ (denoted by $\POTONE$ in this case) is a $\rho$--approximate potential function with	
	\begin{equation*}
			\rho = \max_{e\in \E}\sup_{\x\geq 0}\; \frac{\big(1  + \x\big)^{\k_e}}
			{\frac{1}{\k_e + 1}\big(1 + \x\big)^{\k_e + 1} + \frac{\k_e}{\k_e + 1} - \frac{1}{\k_e + 1}\x^{\k_e + 1}} \leq \d.
	\end{equation*}
	\item[(b)] Otherwise (if $\gamma^*>1$), $\POT_{\gamma}$ is a $\max\{\gamma^*,d\}$--approximate potential function.
\end{itemize}
\end{theorem}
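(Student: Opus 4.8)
The plan is to obtain both parts from Lemma~\ref{thm:fundamental} applied with $\Gamma=\POT_\gamma$, so $\Gamma_e=\Pot_e^{\coeff{e}}$. First I would record that, since $1\le\coeff{e}\le k_e+1$, the coefficients $\tfrac{\coeff{e}}{k_e+1}$ and $1-\tfrac{\coeff{e}}{k_e+1}$ are nonnegative, so $\Pot_e^{\coeff{e}}$ is nonnegative and strictly increases when a player joins; and that, writing $\W=\sum_{j\in\P}\w_j$ and $\W_{-i}=\W-\w_i$, the increment $\Pot_e^{\coeff{e}}(\P)-\Pot_e^{\coeff{e}}(\P\setminus\{i\})$ equals $\tfrac{\coeff{e}}{k_e+1}(\W^{k_e+1}-\W_{-i}^{k_e+1})+(1-\tfrac{\coeff{e}}{k_e+1})\w_i^{k_e+1}>0$. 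Dividing the ratio in \eqref{thm:fundamental:EQ} by $\w_i^{k_e+1}$ and setting $t=\W_{-i}/\w_i\ge0$, it becomes exactly
\[
\phi_{k_e,\coeff{e}}(t)\;:=\;\frac{(1+t)^{k_e}}{\tfrac{\coeff{e}}{k_e+1}\big((1+t)^{k_e+1}-t^{k_e+1}\big)+1-\tfrac{\coeff{e}}{k_e+1}}\;=\;\frac{(1+t)^{k_e}}{1+\coeff{e}\,\tfrac{(1+t)^{k_e+1}-t^{k_e+1}-1}{k_e+1}},
\]
and, as $\P$, $i$ and the weights vary, $t$ ranges over all of $[0,\infty)$. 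So the whole statement reduces to locating the images of the one-variable functions $\phi_{k,\gamma}$ for integers $k\in\{1,\dots,\d\}$ and $\gamma\in[1,k+1]$.

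The technical heart is a short list of single-variable inequalities, each proved the same way — the relevant difference vanishes at $t=0$ and is monotone in $t$. For all integers $k\ge1$, all $\gamma\in[1,k+1]$ and all $t\ge0$: (i) $(1+t)^{k+1}-t^{k+1}\ge1$ by the binomial theorem, so the denominator of $\phi_{k,\gamma}$ is $\ge1$, $\phi_{k,\gamma}(t)>0$, $\phi_{k,\gamma}(0)=1$, and $\phi_{k,\gamma}(t)$ is non-increasing in $\gamma$; (ii) $\phi_{k,\gamma}(t)\ge\tfrac1\gamma$, which after clearing denominators and using $\gamma\ge1$ follows from $p_k(t):=(1+t)^k(k-t)+t^{k+1}-k\ge0$ — here $p_k(0)=0$ and $p_k'(t)=(k+1)\big((1+t)^{k-1}(k-1-t)+t^k\big)$ is nonnegative (termwise for $t\le k-1$; for $t>k-1$ the substitution $s=\tfrac1{1+t}$ turns it into Bernoulli's inequality $(1-s)^k\ge1-ks$), so $p_k$ is non-decreasing; (iii) $\phi_{k,1}(t)\le k$, equivalently $h_k(t):=(1+t)^k(1-kt)+kt^{k+1}\le k^2$ — here $h_k(0)=1$ and $h_k'(t)=k(k+1)t\big(t^{k-1}-(1+t)^{k-1}\big)\le0$, so $h_k(t)\le1\le k^2$; and (iv) $\phi_{k,\gamma}(t)\le1$ whenever $\gamma\ge k$, since $\gamma\big((1+t)^{k+1}-t^{k+1}-1\big)-(k+1)\big((1+t)^k-1\big)$ vanishes at $t=0$ and has nonnegative derivative for $\gamma\ge k$. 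Combining (iii)–(iv) yields the convenient bound $\gamma\,\phi_{k,\gamma}(t)\le\max\{k,\gamma\}$.

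Part~(a) is then immediate: if $\gamma^*=1$ then every $\coeff{e}=1$, so by (ii) the quantity in \eqref{thm:fundamental:EQ} lies in $[1,\rho]$ with $\rho=\max_{e\in\E}\sup_{t\ge0}\phi_{k_e,1}(t)$, and Lemma~\ref{thm:fundamental} with $\lambda=1$, $\upsilon=\rho$ gives that $\POTONE$ is a $\rho$-approximate potential; (iii) gives $\rho\le\d$. For part~(b), with $\gamma^*>1$, by (ii) every value of \eqref{thm:fundamental:EQ} is $\ge\tfrac1{\gamma^*}$ (as $\coeff{e}\le\gamma^*$), so I take $\lambda=\tfrac1{\gamma^*}$ and bound $\upsilon$ by cases. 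If $\gamma^*\ge\d$, then every resource has $\coeff{e}\ge k_e$, whence $\phi_{k_e,\coeff{e}}\le1$ by (iv); taking $\upsilon=1$ gives factor $\gamma^*=\max\{\gamma^*,\d\}$. If $\gamma^*<\d$, then $\coeff{e}\,\phi_{k_e,\coeff{e}}(t)\le\max\{k_e,\coeff{e}\}\le\d$, so a resource with $\coeff{e}=\gamma^*$ has $\phi_{k_e,\coeff{e}}(t)\le\d/\gamma^*$, while any resource with $\coeff{e}\ge k_e$ has $\phi_{k_e,\coeff{e}}(t)\le1\le\d/\gamma^*$; taking $\upsilon=\d/\gamma^*$ gives factor $\d=\max\{\gamma^*,\d\}$. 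I expect the most delicate point to be exactly this last bookkeeping: a single pair $(\lambda,\upsilon)$ must serve all resources simultaneously, which needs the $\coeff{e}$ to be chosen coherently (as in the instantiations $\coeff{e}=\min\{\gamma^*,k_e+1\}$ used for Corollaries~\ref{cor:opt} and~\ref{cor:congested}) rather than freely in $[1,k_e+1]$; the remaining real work is in the one-variable inequalities (ii) and (iii), where Bernoulli's inequality and the sign analyses of $p_k'$ and $h_k'$ do the heavy lifting.
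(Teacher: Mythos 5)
Your overall route is the paper's: apply Lemma~\ref{thm:fundamental} to $\POT_\gamma$, reduce the quantity in \eqref{thm:fundamental:EQ} to the one-variable function $\phi_{k_e,\gamma_e}(t)$ with $t=\mu_i(\P)$, and bound that function. Your inequalities (ii)--(iv) re-derive the content of the paper's Lemma~\ref{lem:technical-1} (the paper compares the binomial expansions of numerator and denominator coefficientwise; you use sign analyses of derivatives plus Bernoulli), and your part~(a) is exactly the paper's argument with $\lambda=1$, $\upsilon=\rho$, $\rho\le d$. One small gap there: the ``convenient bound'' $\gamma\,\phi_{k,\gamma}(t)\le\max\{k,\gamma\}$ does not follow from (iii)--(iv) together with monotonicity in $\gamma$ when $1<\gamma<k$ (those facts only give $\gamma\,\phi_{k,\gamma}\le\gamma k$); it is true, but you need the coefficientwise comparison of Lemma~\ref{lem:technical-1}, i.e.\ $\phi_{k,\gamma}\le\max\{1,k/\gamma\}$, or a separate one-variable argument.

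The genuine problem is part~(b). Your first case rests on ``if $\gamma^*\ge d$ then every resource has $\gamma_e\ge k_e$,'' which is false: $\gamma^*$ is a maximum, so one resource can have $\gamma_e=1$ and $k_e=d$ while another pushes $\gamma^*$ up to $d+1$. In your second case ($\gamma^*<d$), resources with $\gamma_e<\min\{\gamma^*,k_e\}$ are not covered, and their ratio can exceed $d/\gamma^*$ (already for $k_e=2$, $\gamma_e=1$ the supremum is $4/3$, which is larger than $2/\gamma^*$ once $\gamma^*>3/2$). Since Lemma~\ref{thm:fundamental} needs a single pair $(\lambda,\upsilon)$ valid for all resources, for arbitrary $\gamma$ it only yields the factor $\gamma^*\cdot\max_e\max\{1,k_e/\gamma_e\}$, which can be as large as $d\gamma^*$. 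Your closing caveat is exactly the right instinct, and in fact the difficulty is not just bookkeeping: for incoherent $\gamma$ the literal statement of (b) fails. Take $d=2$, resource $e_1$ with $a_{e_1}=12.2$, $k_{e_1}=2$, $\gamma_{e_1}=1$, resource $e_2$ with $a_{e_2}=1$, $k_{e_2}=2$, $\gamma_{e_2}=3$, a weight-$1$ player $i$ with strategies $\{e_1\}$ and $\{e_2\}$, a weight-$1$ player fixed on $e_1$ and a weight-$3$ player fixed on $e_2$. Moving $i$ from $e_1$ to $e_2$ is a $3$-improvement move ($c_i$ drops from $48.8$ to $16$, and $3\cdot 16<48.8$), yet $\POT_\gamma$ increases from $12.2\cdot 4+27=75.8$ to $12.2\cdot 1+64=76.2$, so $\POT_\gamma$ is not a $\max\{\gamma^*,d\}=3$-approximate potential. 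The paper's one-line justification of (b) (``follows from Lemma~\ref{thm:fundamental}, \eqref{thm:approx:res-2*} and the definition of $\gamma^*$'') has the same lacuna; what is actually used downstream (Corollary~\ref{cor:opt}, Lemma~\ref{lem:pos}, Theorem~\ref{thm:pos}) are coherent choices $\gamma_e=\min\{c,k_e+1\}$ for a single $c$, for which the per-resource interval $\left[1/\gamma_e,\max\{1,k_e/\gamma_e\}\right]$ is uniformly contained in $\left[1/\gamma^*,\max\{1,d/\gamma^*\}\right]$ and Lemma~\ref{thm:fundamental} does give exactly $\max\{\gamma^*,d\}$. So the correct completion of your argument is to prove (b) under that coherence assumption (which your proposal gestures at but does not carry out), not for arbitrary $\gamma_e\in[1,k_e+1]$.
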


\begin{proof}
We prove the claim using Lemma \ref{thm:fundamental}. For every resource $e\in \E$, every non-empty subset of players $\P\subseteq \N$ and every player $i \in \P$, we bound the ratio
\begin{equation}\label{thm:approx:ratio}
	\frac{\w_i\l_e(\P)}{\a_e\Big(\Pot_e^{\coeff{e}}(\P) - \Pot_e^{\coeff{e}}(\P \setminus \{i\})\Big)}.
\end{equation}
For every player $i\in \P$, let $\mu_i(\P) = \frac{1}{\w_i}\sum_{j\in \P \setminus\{i\}} \w_j$. We have,
\begin{align}\label{thm:approx:ratio:numerator-2}
	\w_i\l_e(\P) &= \w_i a_e\left(\sum_{j\in \P}\w_j\right)^{\k_e} = \w_i\a_e\left(\w_ i  + \sum_{j\in \P\setminus\{i\}}\w_j\right)^{\k_e}
		= \a_e\w_i^{\k_e + 1} \big(1  + \mu_i(\P)\big)^{\k_e}.  
\end{align}
Now, let us focus on the expression $\Pot_e^{\coeff{e}}(\P) - \Pot_e^{\coeff{e}}(\P \setminus \{i\})$. 
Using  Definition \ref{def:POT}, we have 
\begin{align}
		 &\phantom{==} \Pot_e^{\coeff{e}}(\P) - \Pot_e^{\coeff{e}}(\P \setminus \{i\})\nonumber\\
		 &= \frac{\coeff{e}}{\k_e + 1}\left(\sum_{j\in \P}\w_j\right)^{\k_e + 1} + \left(1 - \frac{\coeff{e}}{\k_e + 1}\right)\sum_{j\in \P}\w_j^{\k_e + 1}\nonumber\\
		 &\phantom{=} - \frac{\coeff{e}}{\k_e + 1}\left(\sum_{j\in \P\setminus \{i\}}\w_j\right)^{\k_e + 1} - \left(1 - \frac{\coeff{e}}{\k_e + 1}\right)\sum_{j\in \P\setminus\{i\}}\w_j^{\k_e + 1}\nonumber\\
		 &= \frac{\coeff{e}}{\k_e + 1}\left(\sum_{j\in \P}\w_j\right)^{\k_e + 1} + \left(1 - \frac{\coeff{e}}{\k_e + 1}\right)\w_i^{\k_e + 1} - \frac{\coeff{e}}{\k_e + 1}\left(\sum_{j\in \P\setminus \{i\}}\w_j\right)^{\k_e + 1}\nonumber\\
		 &= \frac{\coeff{e}}{\k_e + 1}\left(\w_i + \sum_{j\in \P\setminus\{i\}}\w_j\right)^{\k_e + 1} + \left(1 - \frac{\coeff{e}}{\k_e + 1}\right)\w_i^{\k_e + 1} - \frac{\coeff{e}}{\k_e + 1}\left(\sum_{j\in \P\setminus \{i\}}\w_j\right)^{\k_e + 1}\nonumber\\
		 &= \frac{\coeff{e}}{\k_e + 1}\w_i^{\k_e + 1}\big(1 + \mu_i(\P)\big)^{\k_e + 1} + \left(1 - \frac{\coeff{e}}{\k_e + 1}\right)\w_i^{\k_e + 1} - \frac{\coeff{e}}{\k_e + 1}\w_i^{\k_e + 1}\mu_i(\P)^{\k_e + 1}\nonumber\\
		 &= \w_i^{\k_e + 1}\Bigg[ \frac{\coeff{e}}{\k_e + 1}\big(1 + \mu_i(\P)\big)^{\k_e + 1} + \left(1 - \frac{\coeff{e}}{\k_e + 1}\right) - \frac{\coeff{e}}{\k_e + 1}\mu_i(\P)^{\k_e + 1}\Bigg].\label{thm:approx:ratio:denominator-2}
\end{align}
Using \eqref{thm:approx:ratio:numerator-2} and \eqref{thm:approx:ratio:denominator-2}, \eqref{thm:approx:ratio} can be rewritten as
\begin{align}
		&\frac{\w_i\l_e(\P)}
		{\a_e\Big(\Pot_e^{\coeff{e}}(\P) - \Pot_e^{\coeff{e}}(\P \setminus \{i\})\Big)}\nonumber\\
		&= \frac{\a_e\w_i^{\k_e + 1} \big(1  + \mu_i(\P)\big)^{\k_e}}
		{\a_e\w_i^{\k_e + 1}\Bigg[ \frac{\coeff{e}}{\k_e + 1}\big(1 + \mu_i(\P)\big)^{\k_e + 1} + \left(1 - \frac{\coeff{e}}{\k_e + 1}\right) - \frac{\coeff{e}}{\k_e + 1}\mu_i(\P)^{\k_e + 1}\Bigg]}\nonumber\\
		&= \frac{\big(1  + \mu_i(\P)\big)^{\k_e}}
		{\frac{\coeff{e}}{\k_e + 1}\big(1 + \mu_i(\P)\big)^{\k_e + 1} + \left(1 - \frac{\coeff{e}}{\k_e + 1}\right) - \frac{\coeff{e}}{\k_e + 1}\mu_i(\P)^{\k_e + 1}}.\label{thm:approx:res-2}
\end{align}

To proceed, we need the following technical lemma.

\begin{lemma}\label{lem:technical-1}
	For every $\x \in \RP$, $\h \in \Z$ and $\beta \in \RPPP$, we have
	\begin{equation*}
	\frac{(1+\x)^\h}{\frac{\beta}{\h+1}(1+\x)^{\h+1} + (1-\frac{\beta}{\h+1}) - \frac{\beta}{\h+1}\x^{\h+1}} \in \left[\frac{1}{\beta},\max\left\{1,\frac{h}{\beta}\right\}\right].
	\end{equation*}
\end{lemma}
\begin{proof} 
We have
	\begin{align*}
	\frac{(1+\x)^\h}{\frac{\beta}{\h+1}(1+\x)^{\h+1} + (1-\frac{\beta}{\h+1}) - \frac{\beta}{\h+1}\x^{\h+1}}
	&= \frac{\sum_{t=0}^{\h}{\h\choose t}\x^t}{\frac{\beta}{\h+1}\sum_{t=0}^{\h+1}{{\h+1}\choose t}\x^t + (1-\frac{\beta}{\h+1}) - \frac{\beta}{\h+1}\x^{\h+1}}\\  
	&= \frac{1 + \sum_{t=1}^{\h}{\h\choose t}\x^t}{1 + \frac{\beta}{\h+1}\sum_{t=1}^{\h}{{\h+1}\choose t}\x^t}\\
	&= \frac{1 + \sum_{t=1}^{\h}{\h\choose t}\x^t}{1 + \sum_{t=1}^{\h}\frac{\beta}{\h+1-t} {{\h} \choose t}\x^t}.
	\end{align*}
	The lemma now follows by observing that 
	$$\min\{1,\beta/h\} \left(1 + \sum_{t=1}^{\h}{{\h} \choose t}\x^t\right) \leq 1 + \sum_{t=1}^{\h}\frac{\beta}{\h+1-t} {{\h} \choose t}\x^t\leq \beta \left(1 + \sum_{t=1}^{\h}{{\h} \choose t}\x^t\right).$$
\end{proof}

By applying Lemma \ref{lem:technical-1} to \eqref{thm:approx:res-2} with $\x = \mu_i(\P)$, $\h = \k_e$ and $\beta = \coeff{e}$, we obtain that
\begin{equation}\label{thm:approx:res-2*}
	\frac{\w_i\l_e(\P)}
		{\a_e\Big(\Pot_e^{\coeff{e}}(\P) - \Pot_e^{\coeff{e}}(\P \setminus \{i\})\Big)} \in \left[\frac{1}{\coeff{e}},\max\left\{1,\frac{k_e}{\coeff{e}}\right\}\right].
\end{equation}

Part (a) follows by combining Lemma~\ref{thm:fundamental}, \eqref{thm:approx:res-2}, \eqref{thm:approx:res-2*} and the fact that $\gamma^* = 1$. 
Part (b) follows from Lemma~\ref{thm:fundamental}, \eqref{thm:approx:res-2*} and the definition of $\gamma^*$. 
\end{proof}


We remark that $\rho$, as defined in part (a) of Theorem~\ref{thm:approx}, is considerably smaller than $\d$ for small values of the latter. In particular, we can show that $\rho=4/3$, $1.7848$, and $2.326$ for $\d=2$, $3$, and $4$, respectively. Interestingly, these values coincide with those obtained in \cite{Hansknecht14}, even though the expression that gives the approximation bound therein is different than ours. With our expression for $\rho$, we are able to bound it by $\d$ (as opposed to $\d+1$ in \cite{Hansknecht14}). Theorem~\ref{thm:approx} can also be used to obtain the next statement that has originally been proved in \cite{Christodoulou19}, as well as new statements such as Corollary~\ref{cor:congested} below and, more importantly, Theorem~\ref{thm:pos} in the next section.

\begin{corollary}\label{cor:opt}
	Any social optimum is a $(\d+1)$-approximate pure Nash equilibrium.
\end{corollary}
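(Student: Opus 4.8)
The plan is to locate, within the family $\POT_\gamma$ of Definition~\ref{def:POT}, the one instance that is literally the social cost $\C$, and then derive the corollary from part~(b) of Theorem~\ref{thm:approx} together with the definition of an approximate potential function.

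First I would take $\coeff{e} = \k_e + 1$ for every resource $e\in\E$. For this choice the coefficient $1 - \frac{\coeff{e}}{\k_e+1}$ multiplying $\sum_{j\in\P}\w_j^{\k_e+1}$ in Definition~\ref{def:POT} vanishes, so $\Pot_e^{\coeff{e}}(\P) = \big(\sum_{j\in\P}\w_j\big)^{\k_e+1}$ for every $\P\subseteq\N$ (the case $\P=\emptyset$ holds trivially, both sides being $0$). Summing over resources with coefficients $\a_e$ yields $\POT_\gamma(\ss) = \sum_{e\in\E}\a_e\big(\sum_{j\in\L_e(\ss)}\w_j\big)^{\k_e+1}$, which is exactly the resource-wise form of $\C(\ss)$ recorded earlier; hence $\POT_\gamma \equiv \C$. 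Since $\d=\max_{e\in\E}\k_e$, this $\gamma$ satisfies $\gamma^*=\max_{e\in\E}(\k_e+1)=\d+1>1$, so part~(b) of Theorem~\ref{thm:approx} applies and gives that $\C$ is itself a $\max\{\d+1,\d\}=(\d+1)$-approximate potential function of the game.

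It then remains to invoke the elementary fact that a global minimizer of an approximate potential function admits no improving move: if a social optimum $\oo$ allowed a $(\d+1)$-improvement move to some state $[\oo_{-i},s]$, then, since $\C$ is a $(\d+1)$-approximate potential function, we would get $\C([\oo_{-i},s])<\C(\oo)$, contradicting the minimality of $\C(\oo)$. Therefore no such move exists, which by definition means that $\oo$ is a $(\d+1)$-approximate pure Nash equilibrium.

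I do not anticipate any genuine difficulty here. The identity $\POT_\gamma \equiv \C$ is a single cancellation, and the concluding argument is the standard ``potential minimizer is an equilibrium'' reasoning; the only small checks are the degenerate case $\P=\emptyset$ in the simplification of $\Pot_e^{\coeff{e}}$ and the verification that $\gamma^*>1$, so that it is part~(b) and not part~(a) of Theorem~\ref{thm:approx} that is used.
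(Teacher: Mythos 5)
Your proposal is correct and follows exactly the paper's own argument: take $\coeff{e}=\k_e+1$ so that $\POT_\gamma\equiv\C$, invoke Theorem~\ref{thm:approx} to see that $\C$ is a $(\d+1)$-approximate potential function, and conclude that the social optimum, being a global minimizer of this potential, admits no $(\d+1)$-improvement move. The only difference is that you spell out the final ``minimizer of the potential is an equilibrium'' step, which the paper leaves implicit.
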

\begin{proof}
Let $\POT_\gamma$ be the state-valued function with $\gamma=(\coeff{e})_{e\in \E}$ and $\coeff{e} = \k_e + 1$ for $e\in E$.
	The claim follows by observing that $\POT_\gamma(\ss) = \C(\ss)$ and from the fact that, by Theorem \ref{thm:approx}, $\POT_\gamma$ is a $(\d+1)$-approximate potential function. 
\end{proof}

\begin{corollary}\label{cor:congested}
Let $\tau>0$, if the game is $\tau$-congested then the state-valued function $\POT_\gamma$ is an $\exp(1/\tau)$-approximate potential function.
\end{corollary}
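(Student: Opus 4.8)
The plan is to apply Lemma~\ref{thm:fundamental} to the particular member $\POTONE$ of the family (that is, $\POT_\gamma$ with $\coeff{e}=1$ for every $e\in\E$), reusing the computation already carried out in the proof of Theorem~\ref{thm:approx} and exploiting $\tau$-congestion to restrict the range of the relevant variable.

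First I would recall that, by~\eqref{thm:approx:res-2} specialized to $\coeff{e}=1$, for every resource $e\in\E$, every non-empty $\P\subseteq\N$ and every $i\in\P$ the ratio in~\eqref{thm:fundamental:EQ} equals $g_{\k_e}\big(\mu_i(\P)\big)$, where $\mu_i(\P)=\tfrac{1}{\w_i}\sum_{j\in\P\setminus\{i\}}\w_j$ and
\[
g_h(x)=\frac{(1+x)^h}{\tfrac{1}{h+1}(1+x)^{h+1}+\tfrac{h}{h+1}-\tfrac{1}{h+1}x^{h+1}}.
\]
The crucial step is the elementary bound $g_h(x)\le(1+1/x)^h$ valid for all $x\ge 0$. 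Indeed, expanding the denominator exactly as in the proof of Lemma~\ref{lem:technical-1} gives
\[
\tfrac{1}{h+1}(1+x)^{h+1}+\tfrac{h}{h+1}-\tfrac{1}{h+1}x^{h+1}=1+\sum_{t=1}^{h}\tfrac{1}{h+1-t}\binom{h}{t}x^{t},
\]
and retaining only the $t=h$ summand (which equals $x^h$) shows that this denominator is at least $x^h$; hence $g_h(x)\le (1+x)^h/x^h=(1+1/x)^h$.

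Next I would bring in the congestion hypothesis: for any state $\ss$, resource $e$ and player $i\in\L_e(\ss)$, the definition of $\tau$-congestion gives $\mu_i\big(\L_e(\ss)\big)\ge\tau\k_e$. Since the proof of Lemma~\ref{thm:fundamental} only ever evaluates~\eqref{thm:fundamental:EQ} on loads of the form $\L_e(\cdot)$ realized by actual states, it suffices to control $g_{\k_e}$ on the interval $[\tau\k_e,\infty)$; combining the previous bound with $\ln(1+y)\le y$ yields, for every $x\ge\tau\k_e$,
\[
g_{\k_e}(x)\le\Big(1+\tfrac{1}{\tau\k_e}\Big)^{\k_e}\le\exp(1/\tau).
\]
On the other hand, Lemma~\ref{lem:technical-1} applied with $\beta=1$ already gives $g_{\k_e}(x)\ge 1$. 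Therefore the ratio in~\eqref{thm:fundamental:EQ} lies in $[1,\exp(1/\tau)]$ for every relevant triple $e,\P,i$, and Lemma~\ref{thm:fundamental} (with $\lambda=1$ and $\upsilon=\exp(1/\tau)$) shows that $\POTONE$ is an $\exp(1/\tau)$-approximate potential function.

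I do not expect a genuine obstacle here: the whole argument rides on the one-line inequality $g_h(x)\le(1+1/x)^h$ together with the congestion bound $x\ge\tau h$. The only point deserving a word of care is that Lemma~\ref{thm:fundamental} is stated with its hypothesis quantified over \emph{all} non-empty subsets $\P$, whereas $\tau$-congestion constrains only the loads actually produced by states; this is harmless because the proof of Lemma~\ref{thm:fundamental} invokes~\eqref{thm:fundamental:EQ} solely for such realized loads, so rereading that proof with the restricted hypothesis goes through verbatim.
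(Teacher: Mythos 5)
Your proposal is correct and follows essentially the same route as the paper: restrict attention to $\POTONE$, use $\tau$-congestion to get $\mu_i\big(\L_e(\ss)\big)\geq\tau\k_e$ for the only sets $\P$ actually used in Lemma~\ref{thm:fundamental} and Theorem~\ref{thm:approx}, bound the ratio by $\left(1+1/x\right)^{\k_e}\leq\exp(\k_e/x)\leq\exp(1/\tau)$, and invoke Lemma~\ref{thm:fundamental}. The only cosmetic difference is that the paper obtains the denominator bound $\geq x^{\k_e}$ via convexity of $z^{\k_e+1}$ rather than via your binomial-expansion argument; both yield the same estimate.
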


\begin{proof}
The proof is along the lines of the proof of Theorem~\ref{thm:approx}. We remark that, even though the set $P$ is not restricted in the statement of Lemma~\ref{thm:fundamental}, whenever it is used in the proofs of Lemma~\ref{thm:fundamental} and Theorem~\ref{thm:approx}, it coincides with the set of players $\L_e(\ss)$ for a resource $e\in E$ and a state $\ss\in \SoS$. Then, the definition of $\tau$-congested game implies that the quantity $\mu_i(P)$, that is used in the proof of Theorem~\ref{thm:approx}, has value at least $\tau k_e$. Hence, the same proof of Theorem~\ref{thm:approx} yields that the state-valued function $\POT_\gamma$ is a $\rho$-approximate potential function with 
	\begin{equation}\label{eq:better-bound-tau}
\rho = \max_{e\in \E}\sup_{\x\geq \tau k_e}\; \frac{\big(1  + \x\big)^{\k_e}}
{\frac{1}{\k_e + 1}\big(1 + \x\big)^{\k_e + 1} + \frac{\k_e}{\k_e + 1} - \frac{1}{\k_e + 1}\x^{\k_e + 1}}.
\end{equation}

Due to the convexity of function $z^{k_e+1}$, the slope of the line connecting points $(x,x^{k_e+1})$ and $(1+x,(1+x)^{k_e+1})$, which is $(1+x)^{k_e+1}-x^{k_e+1}$, is at least as high as the value of the derivative of the function $z^{k_e+1}$ for $z=x$, i.e., $(k_e+1)x^{k_e}$. Hence, \eqref{eq:better-bound-tau} yields that
\begin{align*}
\rho \leq \max_{e\in \E}\sup_{\x\geq \tau k_e}\; \left(\frac{1  + \x}{\x}\right)^{\k_e} \leq \max_{e\in \E}\sup_{\x\geq \tau k_e}\; \exp(k_e/x)=\exp(1/\tau),
\end{align*}
and the theorem follows.
\end{proof}

\section{Approximate price of stability}\label{sec:pos}
In this section we present our upper bound on the $\alpha$-approximate price of stability, for $\alpha\in [\d,\d+1]$. This bound is stated by Theorem \ref{thm:pos}; the proof uses the following lemma.

\begin{lemma}\label{lem:pos}
	Let $\delta\in [0,1]$ and $\gamma=(\coeff{e})_{e\in E}$ where $\coeff{e}=\min\{k_e+1,d+\delta\}$ for $e\in \E$. 
	For every state $\ss\in\SoS$, it holds that
	\begin{equation*}
	\POT_{\gamma}(\ss) \leq \C(\ss) \leq \frac{d+1}{d+\delta}\POT_{\gamma}(\ss).
	\end{equation*}
\end{lemma}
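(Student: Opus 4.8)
The plan is to reduce everything to a single-resource inequality. Since $\C(\ss)=\sum_{e\in\E}\a_e\big(\sum_{j\in\L_e(\ss)}\w_j\big)^{\k_e+1}$ and $\POT_\gamma(\ss)=\sum_{e\in\E}\a_e\Pot_e^{\coeff{e}}\big(\L_e(\ss)\big)$ with every $\a_e>0$, it suffices to prove, for each resource $e$ and each subset $\P=\L_e(\ss)$, the two inequalities
\[
\Pot_e^{\coeff{e}}(\P)\ \le\ \Big(\textstyle\sum_{j\in\P}\w_j\Big)^{\k_e+1}\ \le\ \frac{\d+1}{\d+\delta}\,\Pot_e^{\coeff{e}}(\P).
\]
The empty-set case is trivial (all three quantities vanish), so I would fix a nonempty $\P$ and write $\k=\k_e$, $\gamma=\coeff{e}=\min\{\k+1,\d+\delta\}$, $S=\sum_{j\in\P}\w_j$ and $Q=\sum_{j\in\P}\w_j^{\k+1}$, so that by Definition~\ref{def:POT} we have $\Pot_e^{\coeff{e}}(\P)=\frac{\gamma}{\k+1}S^{\k+1}+\big(1-\frac{\gamma}{\k+1}\big)Q$.

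Next I would record two elementary facts. First, since $\d+\delta\ge\d\ge1$ and $\k+1\ge2$, we get $1\le\gamma\le\k+1$, hence $\frac{\gamma}{\k+1}\in[0,1]$ and $\Pot_e^{\coeff{e}}(\P)$ is a genuine convex combination of $S^{\k+1}$ and $Q$. Second, superadditivity of $z\mapsto z^{\k+1}$ on the nonnegative reals (valid because $\k+1\ge1$) gives $Q=\sum_{j\in\P}\w_j^{\k+1}\le\big(\sum_{j\in\P}\w_j\big)^{\k+1}=S^{\k+1}$. The upper bound $\Pot_e^{\coeff{e}}(\P)\le S^{\k+1}$, i.e. $\POT_\gamma(\ss)\le\C(\ss)$, is then immediate, since a convex combination of $S^{\k+1}$ and something no larger than $S^{\k+1}$ is at most $S^{\k+1}$.

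For the lower bound I would discard the nonnegative term $\big(1-\frac{\gamma}{\k+1}\big)Q$ to obtain $\Pot_e^{\coeff{e}}(\P)\ge\frac{\gamma}{\k+1}S^{\k+1}$, after which it remains to check the purely numerical inequality $\frac{\gamma}{\k+1}\ge\frac{\d+\delta}{\d+1}$. Here I would split on which term realizes the minimum defining $\gamma$: if $\gamma=\k+1$ the ratio equals $1$ and $\frac{\d+\delta}{\d+1}\le1$ follows from $\delta\le1$; if $\gamma=\d+\delta$ the ratio equals $\frac{\d+\delta}{\k+1}$, and $\frac{\d+\delta}{\k+1}\ge\frac{\d+\delta}{\d+1}$ follows from $\k=\k_e\le\max_{e'\in\E}\k_{e'}=\d$. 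Either way $S^{\k+1}\le\frac{\d+1}{\d+\delta}\Pot_e^{\coeff{e}}(\P)$, which combined with the previous paragraph and summed against the positive coefficients $\a_e$ yields the lemma.

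I do not expect a serious obstacle: the only mildly delicate points are confirming that both coefficients $\frac{\gamma}{\k+1}$ and $1-\frac{\gamma}{\k+1}$ are nonnegative (so that discarding the $Q$-term and using $Q\le S^{\k+1}$ are both legitimate) and correctly handling the two branches of the $\min$ in $\coeff{e}=\min\{\k_e+1,\d+\delta\}$ — and it is precisely there that the hypotheses $\delta\in[0,1]$ and $\d=\max_{e}\k_e$ are used.
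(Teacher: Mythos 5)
Your proposal is correct and follows essentially the same route as the paper: a resource-by-resource comparison of $\a_e\big(\sum_{j\in\L_e(\ss)}\w_j\big)^{\k_e+1}$ with $\a_e\Pot_e^{\coeff{e}}\big(\L_e(\ss)\big)$, using $\sum_{j}\w_j^{\k_e+1}\leq\big(\sum_j\w_j\big)^{\k_e+1}$ for one direction and the bound $\frac{\k_e+1}{\coeff{e}}=\max\big\{1,\frac{\k_e+1}{\d+\delta}\big\}\leq\frac{\d+1}{\d+\delta}$ for the other. Your explicit handling of the empty-resource case and of the two branches of the $\min$ is just a slightly more detailed write-up of the paper's argument.
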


\begin{proof}
	Let $\E=\{e_1, e_2,\ldots, e_m\}$. 
	We need to bound the ratio 
	\begin{equation*}
	\frac{\C(\ss)}{\POT_{\gamma}(\ss)} = 
	\frac{
		\sum_{t=1}^{m}\a_{e_t}\left(\sum_{j\in \L_{e_t}(\ss)}\w_j\right)^{\k_{e_t} + 1}
	}
	{
		\sum_{t=1}^{m}\a_{e_t}\Pot_{e_t}^{\coeff{e_t}}\big(\L_{e_t}(\ss)\big)\phantom{^{\k_e+1}}
	}.
	\end{equation*}
	In order to do so, we consider the ratio between the $t$-th term in the numerator and the $t$-th term in the denominator, for every $t\in [m]$, that is
	\begin{equation}\label{lem:pos2:eq-1}
	\frac{
		\left(\sum_{j\in \L_{e_t}(\ss)}\w_j\right)^{\k_{e_t} + 1}
	}
	{
		\Pot_{e_t}^{\coeff{e_t}}\big(\L_{e_t}(\ss)\big)\phantom{^{\k_e+1}}
	}.
	\end{equation}
Recall the definition of the state-value function $\Pot_e^{\coeff{e}}$, which yields
\begin{align*}
\Pot_e^{\coeff{e}}(\L_{e_t}(\ss))=\frac{\coeff{e}}{\k_e + 1}\left(\sum_{j\in \L_{e_t}(\ss)}\w_j\right)^{k_e + 1} + \left(1 - \frac{\coeff{e}}{\k_e + 1}\right)\sum_{j\in \L_{e_t}(\ss)}\w_j^{\k_e + 1}.
\end{align*}	
By the definition of $\coeff{e}$ and since $\sum_{j\in \L_{e_t}(\ss)}\w_j^{\k_e + 1} \leq \left(\sum_{j\in \L_{e_t}(\ss)}\w_j\right)^{k_e + 1}$, we get that \eqref{lem:pos2:eq-1} is between $1$ and 
$\frac{k_e+1}{\coeff{e}}=\max\left\{1,\frac{k_e+1}{d+\delta}\right\}\leq \frac{d+1}{d+\delta}$. 
It follows that, $\C(\ss)/\POT_{\gamma}(\ss)$ is at least $1$ and at most $\frac{d+1}{d+\delta}$ and the lemma follows.
\end{proof}

\begin{theorem}\label{thm:pos}
	$\PoS{\d+\delta} \leq \frac{d+1}{d+\delta}$, for every $\delta \in [0,1]$. 
\end{theorem}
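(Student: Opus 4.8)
The plan is to combine Theorem~\ref{thm:approx} and Lemma~\ref{lem:pos} through the standard potential function argument. Fix $\delta\in[0,1]$ and let $\gamma=(\coeff{e})_{e\in\E}$ be exactly the vector used in Lemma~\ref{lem:pos}, namely $\coeff{e}=\min\{\k_e+1,\d+\delta\}$ for every $e\in\E$. The first thing I would do is check that this choice is admissible for Theorem~\ref{thm:approx}: since $\d\geq 1$ and $\delta\geq 0$ we have $\d+\delta\geq 1$, and since $\k_e+1\geq 2$, we get $1\leq\coeff{e}\leq \k_e+1$ for every $e$; moreover $\coeff{e}\leq \d+\delta$ for all $e$, so $\gamma^*=\max_{e\in\E}\coeff{e}\leq \d+\delta$. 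Applying Theorem~\ref{thm:approx}: if $\gamma^*=1$ part~(a) gives that $\POT_\gamma$ is a $\rho$-approximate potential function with $\rho\leq \d\leq \d+\delta$, and if $\gamma^*>1$ part~(b) gives that $\POT_\gamma$ is a $\max\{\gamma^*,\d\}$-approximate potential function with $\max\{\gamma^*,\d\}\leq \d+\delta$. In either case $\POT_\gamma$ is a $\rho$-approximate potential function for some $\rho\leq \d+\delta$, and since $\c_i>0$ every $(\d+\delta)$-improvement move is in particular a $\rho$-improvement move; hence $\POT_\gamma$ is a $(\d+\delta)$-approximate potential function. In particular $\EQ{\d+\delta}$ and $\PoS{\d+\delta}$ are well defined.

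Next I would run the improvement dynamics from an optimum. Pick any $\oo\in\OPT$ and consider a maximal sequence of $(\d+\delta)$-improvement moves starting at $\oo$. Because $\POT_\gamma$ is a $(\d+\delta)$-approximate potential function, it strictly decreases at every step; since the state space $\SoS=S_1\times\cdots\times S_{|\N|}$ is finite, the sequence terminates at a state $\ee$ from which no $(\d+\delta)$-improvement move exists, i.e., $\ee\in\EQ{\d+\delta}$ (this also reconfirms $\EQ{\d+\delta}\neq\emptyset$). Along the whole sequence the potential never increases, so $\POT_\gamma(\ee)\leq\POT_\gamma(\oo)$.

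Finally I would chain the bounds of Lemma~\ref{lem:pos}. Applied to the states $\ee$ and $\oo$ with this same $\gamma$, the lemma gives $\C(\ee)\leq \frac{\d+1}{\d+\delta}\POT_\gamma(\ee)$ and $\POT_\gamma(\oo)\leq\C(\oo)$. Combining with $\POT_\gamma(\ee)\leq\POT_\gamma(\oo)$,
\[
\C(\ee)\;\leq\;\frac{\d+1}{\d+\delta}\,\POT_\gamma(\ee)\;\leq\;\frac{\d+1}{\d+\delta}\,\POT_\gamma(\oo)\;\leq\;\frac{\d+1}{\d+\delta}\,\C(\oo),
\]
and hence $\PoS{\d+\delta}=\min_{\ee'\in\EQ{\d+\delta}}\frac{\C(\ee')}{\C(\oo)}\leq\frac{\C(\ee)}{\C(\oo)}\leq\frac{\d+1}{\d+\delta}$, which is the claimed bound.

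The argument is essentially routine once Theorem~\ref{thm:approx} and Lemma~\ref{lem:pos} are available; the only points that require mild care are (i) verifying $\coeff{e}\in[1,\k_e+1]$ and $\gamma^*\leq \d+\delta$ so that Theorem~\ref{thm:approx} genuinely certifies a $(\d+\delta)$-approximate potential (including the degenerate case $\d=1$, $\delta=0$, where part~(a) applies and one must note that a $\rho$-approximate potential with $\rho\leq\d+\delta$ is also a $(\d+\delta)$-approximate one), and (ii) invoking finiteness of $\SoS$ to ensure the $(\d+\delta)$-improvement sequence from $\oo$ actually reaches an equilibrium rather than cycling or running forever. Neither of these is a real obstacle, so I expect no substantial difficulty in carrying out the proof.
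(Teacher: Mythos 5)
Your proposal is correct and follows essentially the same route as the paper: the same choice $\coeff{e}=\min\{\k_e+1,\d+\delta\}$, Theorem~\ref{thm:approx} to certify that $\POT_\gamma$ is a $(\d+\delta)$-approximate potential, a sequence of $(\d+\delta)$-improvement moves started at a social optimum, and Lemma~\ref{lem:pos} to turn $\POT_\gamma(\ee)\leq\POT_\gamma(\oo)$ into the cost bound. The additional checks you spell out (admissibility of $\gamma$ for Theorem~\ref{thm:approx}, that a $\rho$-approximate potential with $\rho\leq\d+\delta$ is also $(\d+\delta)$-approximate, and finiteness of $\SoS$ for convergence) are details the paper leaves implicit, and they are handled correctly.
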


\begin{proof}
Let $\POT_\gamma$ be the function with $\gamma=(\coeff{e})_{e\in \E}$ and $\coeff{e} = \min\{\k_e + 1, d+\delta\}$ for $e\in E$.
Let $\oo\in \OPT$ be a social optimum.
Consider any sequence of $(\d+\delta)$-improvement moves starting from $\oo$.
By Theorem \ref{thm:approx}, we know that this sequence converges to a state which is a $(\d+\delta)$-approximate pure Nash equilibrium; we denote this state by $\ee$.
Moreover, along this sequence of moves, $\POT_\gamma$ is not increasing. Hence,
\begin{equation*}
	\POT_\gamma(\ee) \leq \POT_\gamma(\oo).
\end{equation*} 
Using this fact and applying Lemma \ref{lem:pos} repeatedly to both $\oo$ and $\ee$, we obtain 
\begin{equation*}
	\C(\ee) \leq \frac{d+1}{d+\delta}\POT_{\gamma}(\ee) \leq \frac{d+1}{d+\delta}\POT_{\gamma}(\oo) \leq \frac{d+1}{d+\delta}\C(\oo).
\end{equation*}
The theorem follows.
\end{proof}





\section{Conclusion}\label{sec:discussion}
Our work leaves several open questions. For example, can our techniques yield a better than $d$-approximate potential function? After several unsuccessful attempts to define better potentials, we conjecture that this is not possible. In particular, we believe that for any better than $d$-approximate potential function defined as in Lemma~\ref{thm:fundamental}, there exists a game in \WCGD in which the range condition (\ref{thm:fundamental:EQ}) of Lemma~\ref{thm:fundamental} is violated (at some resource). Unfortunately, we have been unable to prove such a statement so far.

Still, it is important to further expand the class of approximate potential functions for weighted congestion games with polynomial latencies as they may have several applications; let us mention a few here. First, we believe that approximate potential functions will be useful in bounding the convergence time to states with nearly-optimal social cost, extending the results of Awerburch et al.~\cite{AAE+08}, who focused on games that admit exact potential funtions. Second, it is worth investigating whether our results can be combined with the approach in \cite{CaragiannisFGS11,CFGS15} to compute approximate equilibria in polynomial time; the use of our new approximate potential functions could replace the Faulhaber potential in the analysis of \cite{GNS18} and simplify it or even improve it. Third, new approximate potential functions will be useful in determining the best possible bounds on the approximate price of stability. Our bounds in Theorem~\ref{thm:pos} are very close to $1$ but, still, they are not know to be tight (for $\delta\in [0,1)$). Finally, what about the $\rho$-approximate price of stability for $\rho<d$?

\subsection*{Acknowledgments}
A preliminary version of the results in this paper appeared in Proceedings of the 46th International Colloquium on Automata, Languages and Programming (ICALP), pages 133:1-12, 2019. The work was partially supported by the French Project ANR-14-CE24-0007-01 CoCoRICo-CoDec and by COST Action 16228 GAMENET (European Network for Game Theory).


\bibliography{apx-weighted}
\bibliographystyle{plain}

\end{document}